\newcommand{\keywords}[1]{\par\addvspace\baselineskip
\noindent\keywordname\enspace\ignorespaces#1}
\newcommand{\ZZ}{{\mathbb Z}}
\newcommand{\bb}{\mbox{\bf b}}
\newcommand{\ba}{\mbox{\bf a}}
\newcommand{\bs}{\mbox{\bf s}}
\newcommand{\bTheta}{\mbox{$\bf \Theta$}}
\newcommand{\be}{\begin{equation}}
\newcommand{\ee}{\end{equation}}
\newcommand{\mod}{\mbox{~mod~}}
\newcommand{\comment}[1]{}
\begin{document}

\title{Secure Mining of Association Rules in Horizontally Distributed Databases}

\titlerunning{Mining of Association Rules in Horizontally Distributed Databases}

\author{Tamir Tassa\inst{1}}
\authorrunning{T. Tassa} 
%
%
\institute{Department of Mathematics and Computer Science, The Open
University, Israel}


\maketitle

\begin{abstract}
We propose a protocol for secure mining of association rules in horizontally distributed databases.
The current leading protocol is that of Kantarcioglu and Clifton \cite{KanCli}. Our protocol, like theirs,
is based on the Fast Distributed Mining (FDM) algorithm of Cheung et al. \cite{article2}, which is
an unsecured distributed version of the Apriori algorithm. The main ingredients in our protocol are two novel secure multi-party algorithms ---
one that computes the union of private subsets that each of the interacting players hold, and another that tests the inclusion of an
element held by one player in a subset held by another.
Our protocol offers enhanced privacy with respect to the protocol in \cite{KanCli}. In addition, it is simpler and
is significantly more efficient in terms of
communication rounds, communication cost and computational cost.

\keywords{Privacy Preserving Data Mining, Distributed Computation, Frequent Itemsets, Association Rules}
\end{abstract}

\section{Introduction}
We study here the problem of secure mining of association rules in horizontally partitioned databases.
In that setting, there are several sites (or players) that hold homogeneous databases, i.e., databases that share the same schema but hold
information on different entities. The goal is to find all association rules with given minimal support and confidence levels
that hold in the unified database, while minimizing the
information disclosed about the private databases held by those players.

That goal defines a problem of secure multi-party computation. In such problems,
there are $M$ players that hold private inputs, $x_1,\ldots,x_M$, and they wish to securely compute
$y=f(x_1,\ldots,x_M)$ for some public function $f$. If there existed a trusted third party, the players could surrender to him their inputs and he would
perform the function evaluation and send to them the resulting output. In the absence of such a trusted third party, it is needed to devise
a protocol that the players can run on their own in order to arrive at the required output $y$. Such a protocol is considered
perfectly secure if no player can learn from his view of the protocol more than what he would have learnt in the idealized setting where the computation is carried out
by a trusted third party. Yao \cite{Yao} was the first to propose a generic solution for this problem in the case of two players.
Other generic solutions, for the multi-party case, were later proposed in \cite{BMR,BNP,GMW87}.

In our problem, the inputs are the partial databases, and the required output is the list of association rules with given support and confidence.
As the above mentioned generic solutions rely upon a description of the function $f$ as a Boolean circuit, they can be applied only to small inputs and functions which are
realizable
by simple circuits. In more complex settings, such as ours,
other methods are required for carrying out this computation.
In such cases, some relaxations of the notion of perfect security might be inevitable when looking for practical protocols,
provided that the excess information is deemed benign (see examples of such protocols in e.g. \cite{KanCli,VC02,ZYW}).

Kantarcioglu and Clifton studied that problem in \cite{KanCli} and devised a protocol for its solution.
The main part of the protocol is a sub-protocol for the secure computation of the union of private subsets that are held by the different players. (Those subsets
include candidate itemsets, as we explain below.) That is the most costly part of the protocol and its implementation
relies upon cryptographic primitives such as commutative encryption, oblivious transfer, and hash functions.
This is also the only part in the protocol in which the players may extract from their view of the protocol
information on other databases, beyond what is implied by the final output and their
own input. While such leakage of information renders the protocol not perfectly secure, the perimeter of the excess information is explicitly bounded in \cite{KanCli}
and it is argued that such information leakage is innocuous, whence acceptable
from practical point of view.

Herein we propose an alternative protocol for the secure computation of the union of private subsets.
The proposed protocol improves upon that in \cite{KanCli} in terms
of simplicity and efficiency as well as privacy. In particular, our protocol
does not depend on commutative encryption and oblivious transfer (what simplifies
it significantly and contributes towards reduced communication and computational costs). While our solution is still not perfectly secure, it
leaks excess information only to a small number of coalitions (three),
unlike the protocol of \cite{KanCli} that discloses information also to some single players. In addition, we claim that the excess information
that our protocol may leak is less sensitive than the
excess information leaked by the protocol of \cite{KanCli}.

The protocol that we propose here computes a parameterized family of functions, which we call threshold functions, in which the two extreme cases
correspond to the problems of computing the union and intersection of private subsets. Those are in fact general-purpose protocols that can be used
in other contexts as well. Another problem of secure multi-party computation that we solve here as part of our discussion is the problem of determining whether
an element held by one player is included in a subset held by another.

\subsection{Preliminaries}\label{prelim}
Let $D$ be a transaction database.
As in \cite{KanCli}, we view $D$ as a binary matrix of $N$ rows and $L$ columns, where each row is a transaction over some set of items, $A=\{a_1,\ldots,a_L\}$,
and each column represents one of the items in $A$. (In other words, the $(i,j)$th entry of $D$ equals 1 if the $i$th transaction includs the item $a_j$, and 0 otherwise.)
The database $D$ is partitioned horizontally between $M$ players, denoted $P_1,\ldots,P_{M}$.
Player $P_m$ holds the partial database $D_m$
that contains $N_m=|D_m|$ of the transactions in $D$, $1 \leq m \leq M$. The unified database is $ D = D_1\cup  \cdots \cup D_{M} $, and $N = \sum_{m=1}^M N_m$.

An itemset $X$ is a subset of $A$. Its global support, $supp(X)$, is the number of transactions in $D$ that contain it. Its local support, $supp_m(X)$, is the number of
transactions in $D_m$ that contain it. Clearly, $supp(X)=\sum_{m=1}^{M} supp_m(X)$.
Let $s$ be a real number between 0 and 1 that stands for a required threshold support. An itemset $X$ is called $s$-frequent if $supp(X) \geq s N$.
It is called locally $s$-frequent
at $D_m$ if $supp_m(X) \geq s N_m$.

For each $1 \leq k \leq L$, let $F_s^k$ denote the set of all $k$-itemsets (namely, itemsets of size $k$) that are $s$-frequent,
and $F_s^{k,m}$ be the set of all $k$-itemsets that are locally $s$-frequent at $D_m$, $1 \leq m \leq M$.
Our main computational goal is to find, for a given threshold support $0 < s \leq 1$, the set of all $s$-frequent itemsets, $F_s:=\bigcup_{k=1}^L F_s^k$.
We may then continue to find all $(s,c)$-association rules, i.e., all association rules of support at least $sN$ and confidence at least $c$.
(Recall that if $X$ and $Y$ are two disjoint subsets of $A$, the support of the corresponding association rule $X \Rightarrow Y$ is $supp(X \cup Y)$ and its confidence
is $supp(X \cup Y)/supp(X)$.)

\comment{@@@
Assume that we wish to mine all association rules with support at least $sN$ and confidence at least $c$ (we refer to such rules as $(s,c)$-rules).
If $Z \in F_s$ is one of the $s$-frequent itemsets and $Z = X \cup Y$ is a decomposition of $Z$ into two disjoint subsets, then $X \Rightarrow Y$ is an $(s,c)$-rule
if and only if $supp(X \cup Y)/supp(X) \geq c$.
If in the first stage, the players found all $s$-frequent itemsets as well as their support counts,
then that information suffices for verifying the $c$-confidence of the candidate rules.
If, however,
the output of the first stage includes just the list of $s$-frequent itemsets, without their support counts, the verification of $c$-confidence
entails
another secure multi-party protocol.
}

\subsection{The Fast Distributed Mining algorithm}
The protocol of \cite{KanCli}, as well as ours, are based on the
Fast Distributed Mining (FDM) algorithm of Cheung et al. \cite{article2}, which is
an unsecured distributed version of the Apriori algorithm.
Its main idea is that any $s$-frequent itemset must be
also locally $s$-frequent in at least one of the sites. Hence, in order to find
all globally $s$-frequent itemsets, each player reveals his locally $s$-frequent itemsets and then the players
check each of them to see if they are $s$-frequent also globally.
The stages of the FDM algorithm are as follows:

\begin{enumerate}[(1)]
    \item {\bf Initialization:} It is assumed that the players have already jointly calculated $F_s^{k-1}$. The goal is to proceed and calculate $F_s^k$.
    \item {\bf Candidate Sets Generation:} Each $P_m$ generates a set of candidate $k$-itemsets $B_s^{k,m}$ out of $F_s^{k-1,m} \cap F_s^{k-1}$ --- the
    $(k-1)$-itemsets that are both globally and locally frequent, using the Apriori algorithm.
    \item {\bf Local Pruning:} For each $X \in B_s^{k,m}$, $P_m$ computes $supp_m(X)$ and
    retains only those itemsets that are locally $s$-frequent. We denote  this collection of itemsets
    by $C_s^{k,m}$.
    \item {\bf Unifying the candidate itemsets:} Each player broadcasts his $C_s^{k,m}$ and then all players
    compute $C_s^k:=\bigcup_{m=1}^{M} C_s^{k,m}$.
    \item {\bf Computing local supports.} All players compute the local supports of all itemsets in $C_s^{k}$.
    \item {\bf Broadcast Mining Results:} Each player broadcasts the local supports that he computed. From that, everyone can compute the global support of every
    itemset in  $C_s^k$. Finally, $F_s^k$ is the subset of $C_s^k$ that consists of all globally $s$-frequent $k$-itemsets.
\end{enumerate}

\comment{@@@
\noindent
(When $k=1$, $C_s^{1,m}=F_s^{1,m}$ --- the set of single items that are $s$-frequent in $D_m$.)
}

\subsection{Overview and organization of the paper}
The FDM protocol violates privacy in two stages: In Stage 4, where the
players broadcast the itemsets that are locally frequent
in their private databases, and in Stage 6, where they broadcast the sizes of the local supports of candidate itemsets. Kantarcioglu and Clifton \cite{KanCli}
proposed secure implementations of those two stages. Our improvement is with regard to the secure implementation of Stage 4, which is the more costly stage
of the protocol, and the one in which the protocol of \cite{KanCli} leaks excess information.
In Section \ref{ss1} we describe \cite{KanCli}'s secure implementation of Stage 4. We then describe our alternative implementation
and we proceed to compare the two implementations in terms of privacy of efficiency.
In the next two short Sections \ref{ss2} and \ref{ss3} we describe briefly, for the sake of completeness, \cite{KanCli}'s implementation
of the two remaining stages of the distributed protocol: The identification of those candidate itemsets that are globally $s$-frequent,
and then the derivation of all $(s,c)$-association rules.
Section \ref{rel} includes
a review of related work. We conclude the paper in Section \ref{conc}.

Like in \cite{KanCli} we assume that the players are semi-honest; namely, they follow the protocol but try to extract as much information as possible from their own view.
(See \cite{JiC,SWG04,ZYW} for a discussion and justification of that assumption.)
We too, like \cite{KanCli}, assume that $M>2$. (The case $M=2$ is discussed in \cite[Section 5]{KanCli}; the conclusion is that the
problem of secure computation of frequent itemsets and association rules in the two-party case is unlikely to be of use.)

\section{Secure computation of all locally frequent itemsets}\label{ss1}
Here we discuss the secure computation of the union $C_s^k=\bigcup_{m=1}^{M} C_s^{k,m}$. We describe the protocol of \cite{KanCli} (Section \ref{ss1-KC})
and then our protocol (Sections \ref{or}--\ref{ss1-ours}). We analyze the privacy of the two protocols
in Section \ref{ss1-pri}, their communication cost in Section \ref{ss1-eff}, and their computational cost in Section \ref{ss1-eff1}.

\subsection{The protocol of \cite{KanCli} for the secure computation of all locally frequent itemsets}\label{ss1-KC}
Protocol \ref{alg:1} (\textsc{UniFI-KC} hereinafter) is the protocol that was suggested by Kantarcioglu and Clifton \cite{KanCli}
for computing the unified list of all locally frequent itemsets, $C_s^k=\bigcup_{m=1}^{M} C_s^{k,m}$, without disclosing the sizes
of the subsets $C_s^{k,m}$ nor their contents.
It is based on two ideas: Hiding the sizes of the subsets $C_s^{k,m}$ by means of fake itemsets, and hiding their content by means of encryption.
Let $Ap(F_s^{k-1})$ denote the set of $k$-itemsets which the Apriori algorithm generates when applied on $F_s^{k-1}$.
Clearly, $C_s^{k,m} \subseteq Ap(F_s^{k-1})$, whence $|C_s^{k,m}| \leq |Ap(F_s^{k-1})|$, for all $1 \leq m \leq M$.
Therefore, after each player computes $C_s^{k,m}$, he adds to it fake itemsets until its size becomes $|Ap(F_s^{k-1})|$ in order to
hide the number of locally frequent itemsets that he has. In order to hide the actual itemsets, they use
a commutative encryption algorithm.
(A commutative encryption means that $ E_{K_1} \circ E_{K_2} = E_{K_2} \circ E_{K_1}$ for any pair of keys $K_1$ and $K_2$.)
We proceed to describe the protocol. Since all protocols that we present here
involve cyclic communication rounds, $P_{M+1}$ always means $P_1$, while $P_0$ means $P_M$.

\begin{algorithm}[t!!!]
\begin{algorithmic}[1]
\caption{\label{alg:1} (\textsc{UniFI-KC}) Unifying lists of locally Frequent Itemsets \cite{KanCli}}

\INPUT Each player $P_m$ has an input set $C_s^{k,m} \subseteq Ap(F_s^{k-1})$, $1 \leq m \leq M$.
\OUTPUT $C_s^k=\bigcup_{m=1}^{M} C_s^{k,m} $.

\STATE \textit{\bf \underline{Phase 0:} Getting started}
\STATE The players decide on a commutative cipher and each player $P_m$, $1 \leq m \leq M$, selects a random secret encryption key $K_m$.
\STATE The players select a hash function $h$ and compute $h(x)$ for all $x \in Ap(F_s^{k-1})$.
\STATE If there exist $x_1 \neq x_2 \in Ap(F_s^{k-1})$ for which $h(x_1) = h(x_2)$, select a different $h$.
\STATE Build a lookup table $T=\{(x,h(x)): x \in Ap(F_s^{k-1})\}$.

\STATE \textit{\bf \underline{Phase 1:} Encryption of all itemsets}
\FORALL {Player $P_m$, $1 \leq  m \leq M$,}
\STATE Set $X_m=\emptyset$.
\FORALL {$x \in C_s^{k,m}$}
\STATE Player $P_m$ computes $E_{K_m}(h(x))$ and adds it to $X_m$.
\ENDFOR
\STATE Player $P_m$ adds to $X_m$ faked itemsets until its size becomes $|Ap(F_s^{k-1})|$.
\ENDFOR
\FOR {$i = 2$ to  $M$}
\STATE $P_m$ sends a permutation of $X_m$ to $P_{m+1}$.
\STATE $P_m$ receives from $P_{m-1}$ the permuted $X_{m-1}$.
\STATE $P_m$ computes a new $X_m$ as the encryption of the permuted $X_{m-1}$ using $K_m$.
\ENDFOR


\STATE \textit{\bf \underline{Phase 2:} Merging itemsets}
\STATE Each odd player sends his encrypted sets to player $P_1$.
\STATE Each even player sends his encrypted sets to player $P_2$.
\STATE $P_1$ unifies all sets that were sent by the odd players and removes duplicates.
\STATE $P_2$ unifies all sets that were sent by the even players and removes duplicates.
\STATE $P_2$ sends his permuted list of itemsets to $P_1$.
\STATE $P_1$ unifies his list of itemsets and the list received from $P_2$ and then removes duplicates from the unified list.
Denote the final list by $EC_s^k$.


\STATE \textit{\bf \underline{Phase 3:} Decryption}
\FOR{$m$ = 1 to $M-1$}
\STATE $P_m$ decrypts all itemsets in $EC_s^k$ using $K_m$.
\STATE $P_m$ sends the permuted (and $K_m$-decrypted) $EC_s^k$ to $P_{m+1}$.
\ENDFOR
\STATE $P_{M}$ decrypts all itemsets in $EC_s^k$ using $E_{M}$; denote the resulting set by $C_s^k$.
\STATE $P_{M}$ uses the lookup table $T$ to remove from $C_s^k$ faked itemsets.
\STATE $P_{M}$ broadcasts $C_s^k$.

\end{algorithmic}
\end{algorithm}

In the preliminary Phase 0 (Steps 2-5) the players select the needed cryptographic primitives: They jointly select a commutative cipher, and each player selects
a corresponding random private key.  In addition, they select a hash function $h$ to apply on all itemsets prior to encryption.
It is essential that $h$ will not experience collusions
on $Ap(F_s^{k-1})$ in order to make it invertible on $Ap(F_s^{k-1})$.
Hence, if such collusions occur (an event of a very small probability), a different hash function must be selected.
At the end, the players compute a lookup table with the hash values of all candidate itemsets in $Ap(F_s^{k-1})$, which will be used later on to find the
preimage of a given hash value.

In Phase 1, all players compute a composite encryption of the hashed sets $C_s^{k,m}$, $1 \leq m \leq M$.
First (Steps 7-13), each player $P_m$ hashes all itemsets in $C_s^{k,m}$ and then encrypts them using the key $K_m$. (Hashing is needed in order
to prevent leakage of algebraic relations between itemsets, see \cite[Appendix]{KanCli}.)
Then, he adds to the resulting set faked itemsets until its size becomes
$|Ap(F_s^{k-1})|$. We denote the resulting set by $X_m$. Then (Steps 14-18), the players start a loop of $M-1$ cycles, where in each cycle they perform
the following operation: Player $P_m$ send a permutation of
$X_m$ to the next player $P_{m+1}$; Player $P_m$ receives from $P_{m-1}$ a permutation of the set $X_{m-1}$ and then computes a
new $X_m$ as $X_m=E_{K_m}(X_{m-1})$. At the end of this loop, $P_m$ holds an encryption of the hashed $C_s^{k,m+1}$ using all $M$ keys. Due
to the commutative property of the selected cipher, Player $P_m$ holds the set
$ \{ E_{M}(\cdots (E_2(E_1(h(x)))) \cdots) : x \in C_s^{k,m+1} \}$.

In Phase 2 (Steps 20-25), all odd players send the sets that they have to $P_1$, who unifies them. Consequently,
$P_1$ will hold all hashed and encrypted itemsets of the even players.
Similarly, all even players send the sets that they have to $P_2$,
who also unifies them; hence, $P_2$ will hold all hashed and encrypted itemsets of the odd players.
At this stage, both $P_1$ and $P_2$ remove duplicates, as those
stand (with high probability) for itemsets that were frequent in more than one site.
Then, player $P_2$ permutes his list and sends it to $P_1$ who unifies it with the list that he got. Therefore, at the completion of this stage $P_1$ holds
the union set $C_s^k = \bigcup_{m=1}^{M} C_s^{k,m}$ hashed and then
encrypted by all encryption keys, together with some fake itemsets that were used for the sake of hiding
the sizes of the sets $C_s^{k,m}$; those fake itemsets are not needed anymore and will be removed after decryption in the next
phase.

In phase 3 (Steps 27-33) a similar round of decryptions is initiated. At the end, the last player who performs the last decryption
uses the lookup table $T$ that was constructed in Step 5 in order to identify and remove the fake itemsets and then
to recover $C_s^k$. Finally, he broadcasts $C_s^k$ to all his peers.

\subsection{A secure multiparty protocol for computing the \texttt{OR}}\label{or}
Protocol \textsc{UniFI-KC} securely computes of the union of private subsets of some publicly known ground set ($Ap(F_s^{k-1})$).
Such a problem is equivalent to the problem of computing the \texttt{OR} of private vectors.
Indeed, if we let $n$ denote the size of the ground set, then the private subset that
player $P_m$, $1 \leq m \leq M$, holds may be described by a binary vector $\bb_m \in \ZZ_2^n$, and the union of the private subsets is
described by the \texttt{OR} of those private vectors,
$\bb:= \bigvee_{m=1}^{M} \bb_m$.
We present here a protocol for computing that function which
is simpler than \textsc{UniFI-KC} and employs less cryptographic primitives.

The protocol that we present (Protocol \ref{algorithm:thresh2})
computes a wider range of functions, which we call threshold functions.

\begin{definition}
Let $b_1,\ldots,b_M$ be $M$ bits and $1 \leq t \leq M$ be an integer. Then
\be T_t(b_1,\ldots,b_M) = \left\{ \begin{array}{ll} 1 & \mbox{~~~if~~} \sum_{m=1}^M b_m \geq t \\ ~&~ \\ 0 & \mbox{~~~if~~} \sum_{m=1}^M b_m < t \end{array} \right. \label{thres}\ee
is called the $t$-threshold function.
Given binary vectors $\bb_m=(b_m(1),\ldots,b_m(n))  \in \ZZ_2^n$, we let
$ T_t(\bb_1,\ldots,\bb_M)$ denote the binary vector in which the $i$th component equals $T_t(b_1(i),\ldots,b_M(i))$, $1 \leq i \leq n$.
\end{definition}

\noindent
The \texttt{OR} and \texttt{AND} functions are the $1$- and $M$-threshold functions, respectively; i.e.,
$$ \bigvee_{m=1}^{M} \bb_m = T_1(\bb_1,\ldots,\bb_M) \,, \quad \mbox{and} \quad  \bigwedge_{m=1}^{M} \bb_m = T_M(\bb_1,\ldots,\bb_M)\,.$$
Those special cases may be used, as we show in Section \ref{ss1-ours},
to compute in a privacy-preserving manner unions and intersections of subsets.

The main idea behind Protocol \ref{algorithm:thresh2} (\textsc{Threshold} henceforth), which is based on the protocol
suggested in \cite{Ben} for secure computation of the sum, is to compute shares of the sum vector
and then use those shares to securely verify the threshold conditions in each component. Since the sum vector may be seen as a vector over $\ZZ_{M+1}$,
each player starts by creating random shares in $\ZZ_{M+1}^n$ of his input vector (Step 1).
In Step 2 all players send to all other players the corresponding shares in their input vector.
Then (Step 3), player $P_\ell$, $1 \leq \ell \leq M$, adds the shares that he got and arrives at his share, $\bs_\ell$, in the sum vector.
Namely, if we let $\ba: = \sum_{m=1}^M \bb_m$ denote the sum of the input vectors, then
$ \ba= \sum_{\ell=1}^M \bs_\ell \mod (M+1) $;
furthermore, any $M-1$ vectors out of $\{\bs_1,\ldots,\bs_M\}$ do not reveal any information on the sum $\ba$.
In Steps 4-5, all players, apart from the last one, send their shares to $P_1$ who adds them up to get the share $\bs$.
Now, players $P_1$ and $P_M$ hold additive shares of the sum vector $\ba$: $P_1$ has $\bs$, $P_M$ has $\bs_M$, and $\ba = (\bs+\bs_M)$ mod $(M+1)$.
It is now needed to check for each component $1 \leq i \leq n$ whether
$(s(i)+s_M(i)) \mbox{~mod~} (M+1) <t$. Equivalently, we need to check whether
\be (s(i)+s_M(i)) \mbox{~mod~} (M+1) \in \{j: 0 \leq j \leq t-1\} \,. \label{yuy}\ee
The inclusion in (\ref{yuy}) is equivalent to
\be s(i) \in \Theta(i):=\{ (j-s_M(i)) \mbox{~mod~} (M+1): 0 \leq j \leq t-1\} \,. \label{yuy1}\ee
The value of $s(i)$ is known only to $P_1$ while the set $\Theta(i)$ is known only to $P_M$.
The problem of verifying the set inclusion in Eq. (\ref{yuy1})
can be seen as a simplified version of the \emph{privacy-preserving keyword search}, which was solved by Freedman et. al. \cite{ks}.
In the case of the the \texttt{OR} function, $t=1$, which is the relevant case for us, the set $\Theta(i)$ is of size 1, and therefore it is the problem
of oblivious string comparison, a problem that was solved in e.g. \cite{FNW}.
However, we claim that, since $M>2$, there is no need to invoke neither of the secure protocols of \cite{ks} or \cite{FNW}.
Indeed, as $M>2$, the existence of
other semi-honest players can be used to verify the inclusion in Eq. (\ref{yuy1}) much more easily.
This is done in Protocol \ref{algorithm:compare} (\textsc{SetInc}) which we proceed to describe next.

\begin{algorithm}[h!!]
\caption{\label{algorithm:thresh2} (\textsc{Threshold}) Secure computation of the $t$-threshold function}
\begin{algorithmic}[1]
\INPUT Each player $P_m$ has an input binary vector $\bb_m  \in \ZZ_2^n$, $1 \leq m \leq M$.
\OUTPUT $\bb := T_t(\bb_1,\ldots,\bb_M) $.
\STATE Each $P_m$ selects $M$ random share vectors $\bb_{m,\ell} \in \ZZ_{M+1}^n$, $1 \leq \ell \leq M$, such that $\sum_{\ell=1}^M \bb_{m,\ell}=\bb_m$ mod $(M+1)$.
\STATE Each $P_m$ sends $\bb_{m,\ell}$ to $P_\ell$ for all $1 \leq \ell \neq m \leq M$.
\STATE Each $P_\ell$ computes $\bs_\ell =(s_\ell(1),\ldots,s_\ell(n)):= \sum_{m=1}^M \bb_{m,\ell}$ mod $(M+1)$.
\STATE Players $P_\ell$, $2 \leq \ell \leq M-1$, send $\bs_\ell$ to $P_1$.
\STATE $P_1$ computes $\bs =(s(1),\ldots,s(n)):= \sum_{\ell=1}^{M-1} \bs_{\ell}$ mod $(M+1)$.
\FOR {$i=1,\ldots,n$}
\STATE If $(s(i) + s_M(i) ) \mod (M+1) < t$ set $b(i)=0$ otherwise set $b(i)=1$.
\ENDFOR
\STATE Output $\bb=(b(1),\ldots,b(n))$.
\end{algorithmic}
\end{algorithm}

Protocol \textsc{SetInc} starts with players $P_1$ and $P_M$ agreeing
on a keyed hash function $h_K(\cdot)$ (e.g., HMAC \cite{hmac}), a corresponding secret key $K$, and a long random bit string $r$ (Steps 1-2).
Then (Steps 3-4), $P_1$ converts his sequence of elements $\bs=(s(1),\ldots,s(n))$ into a sequence of corresponding ``signatures"
$\bs'=(s'(1),\ldots,s'(n))$, where $s'(i)=h_K(r,i, s(i))$ and $P_M$ does a similar conversions to the subsets that he holds.
$P_1$ then sends $\bs'$ to $P_2$ and $P_M$ sends to $P_2$ a random permutation of each of the subsets $\Theta'(i)$, $1 \leq i \leq n$.
Finally, $P_2$ performs the relevant inclusion verifications on the signature values.
If he finds out that for a given $1 \leq i \leq n$, $s'(i) \in \Theta'(i)$, he may infer, with high probability, that $s(i) \in \Theta(i)$, whence
he sets $b(i)=0$. If, on the other hand, $s'(i) \notin \Theta'(i)$, then, with certainty, $s(i) \notin \Theta(i)$, whence
he sets $b(i)=1$.

Two comments are in order:

\begin{enumerate}[(1)]
\item
If the index $i$ had not been part of the input to the hash function (Steps 3-4), then two equal components in $P_1$'s input vector, say
$s(i)=s(j)$, would have been mapped to two equal signatures, $s'(i)=s'(j)$.
Hence, in that case player $P_2$ would have learnt that in $P_1$'s input vector the $i$th and $j$th
components are equal. To prevent such leakage of information, we include the index $i$ in the input to the hash function.
\item
An event in which $s'(i) \in \Theta'(i)$ while $s(i) \notin \Theta(i)$ indicates a collusion; specifically, it implies that there exist $\theta' \in \Theta(i)$
and $\theta'' \in \Omega \setminus \Theta(i)$ for which $h_K(r,i,\theta')=h_K(r,i,\theta'')$. Hash functions are designed so that the probability of such collusions
is negligible, whence the risk of a collusion can be ignored. However, it is possible for player $P_M$ to check upfront the selected random
values ($K$ and $r$)
in order to verify that for all $1 \leq i \leq n$, the sets
$\Theta'(i) = \{ h_K(r,i,\theta) : \theta \in \Theta(i)\}$ and
$\Theta''(i) = \{ h_K(r,i,\theta) : \theta \in \Omega \setminus \Theta(i)\}$ are disjoint.
\end{enumerate}

\begin{algorithm}[h!!]
\caption{\label{algorithm:compare} (\textsc{SetInc}) Set Inclusion computation}
\begin{algorithmic}[1]
\INPUT $P_1$ has a vector $\bs=(s(1),\ldots,s(n))$ and $P_M$ has a vector $\bTheta=(\Theta(1),\ldots,\Theta(n))$, where for all $1 \leq i \leq n$,
$s(i) \in \Omega$ and $\Theta(i) \subseteq \Omega$ for some ground set $\Omega$.
\OUTPUT The vector $\bb=(b(1),\ldots,b(n))$ where $b(i)=0$ if $s(i) \in \Theta(i)$ and $b(i)=1$ otherwise, $1 \leq i \leq n$.
\STATE $P_1$ and $P_M$ agree on a keyed-hash function $h_K(\cdot)$ and on a secret key $K$.
\STATE $P_1$ and $P_M$ agree on a long random bit string $r$.
\STATE $P_1$ computes $\bs'=(s'(1),\ldots,s'(n))$, where $s'(i)=h_K(r,i,s(i))$, $1 \leq i \leq n$.
\STATE $P_M$ computes $\bTheta'=(\Theta'(1),\ldots,\Theta'(n))$, where $\Theta'(i) = \{ h_K(r,i,\theta) : \theta \in \Theta(i)\}$, $1 \leq i \leq n$.
\STATE $P_1$ sends to $P_2$ the vector $\bs'$.
\STATE $P_M$ sends to $P_2$ the vector $\bTheta'$ in which each $\Theta(i)$ is randomly permuted.
\STATE For all $1 \leq i \leq n$, $P_2$ sets $b(i)=0$ if $s'(i) \in \Theta'(i)$ and $b(i)=1$ otherwise.
\STATE $P_2$ broadcasts the vector $\bb=(b(1),\ldots,b(n))$.
\end{algorithmic}
\end{algorithm}

We refer hereinafter to the combination of Protocols \textsc{Threshold} and \textsc{SetInc} as Protocol
\textsc{Threshold-C}; namely, it is
Protocol \textsc{Threshold} where the inequality verifications in Steps 6-8 are carried out by Protocol \textsc{SetInc}. Then our claims are as follows:

\begin{theorem}\label{prop2} Assume that the $M>2$ players are semi-honest and that
the keyed hash function in Protocol \textsc{SetInc} is preimage-resistant. Then:
\begin{enumerate}[(a)]
\item Protocol \textsc{Threshold-C} is correct (i.e., it computes the threshold function).
\item Let $C \subset \{P_1,P_2,\ldots,P_M\}$ be a coalition of players.
\begin{enumerate}[(i)]
\item
If $P_2 \notin C$ and at least one of $P_1$ and $P_M$ is not in $C$ either, then Protocol \textsc{Threshold-C} is perfectly private with respect to $C$.
\item
If $P_2 \in C$ but $P_1,P_M \notin C$, the protocol is computationally private with respect to $C$.
\item
Otherwise, $C$ includes at least two of the players $P_1,P_2,P_M$; such coalitions may learn the sum $\ba = \sum_{m=1}^M \bb_m$, but no further information beyond the sum.
\end{enumerate}
\end{enumerate}
\end{theorem}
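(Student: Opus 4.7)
The plan is to dispose of correctness with a short additive-sharing calculation, and then handle privacy by constructing, in each of the three sub-cases, a polynomial-time simulator that produces a transcript indistinguishable from the coalition's real view; the case split in (b) is where essentially all the work sits.

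For (a), I would first unwind the shares: summing the identity $\sum_{\ell=1}^M \bb_{m,\ell} \equiv \bb_m \pmod{M+1}$ over $m$ and swapping the order of summation yields $\sum_{\ell=1}^M \bs_\ell \equiv \ba \pmod{M+1}$, where $\ba := \sum_{m=1}^M \bb_m$. Since each coordinate of $\ba$ is at most $M < M+1$, this congruence is actually an equality in $\ZZ$. Because $P_1$ computes $\bs = \sum_{\ell=1}^{M-1} \bs_\ell$, we get $(\bs + \bs_M)\bmod(M+1) = \ba$, so the inequality tested in Step 7 is equivalent to $s(i) \in \Theta(i)$, with $\Theta(i)$ as defined in the excerpt. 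It then remains to verify that Protocol \textsc{SetInc} correctly decides this membership; the only way it can err is through a keyed-hash collision $h_K(r,i,\theta') = h_K(r,i,\theta'')$ with $\theta' \in \Theta(i)$ and $\theta'' \in \Omega \setminus \Theta(i)$, which preimage-resistance makes negligible (and which $P_M$ can even rule out upfront by inspecting $(K,r)$, as observed in comment (2)).

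For case (i) of (b), since $P_2 \notin C$ the coalition never sees the hashed vectors $\bs'$ or $\bTheta'$, so the \textsc{SetInc} subprotocol contributes nothing to its view. The remaining view consists of the shares $\bb_{m,\ell}$ that some $\ell \in C$ received from each $m$, together with at most one of $\bs$ or $(\bs_M, \bTheta)$ (not both, since at least one of $P_1, P_M$ is outside $C$). The information-theoretic security of additive sharing in $\ZZ_{M+1}$ says that any $M-1$ of the shares $\bb_{m,1},\ldots,\bb_{m,M}$ are jointly uniform, so the simulator can sample every missing share uniformly at random; the resulting transcript has exactly the real distribution, giving perfect privacy. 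Case (ii) is identical except that $P_2 \in C$ now also sees $\bs'$ and $\bTheta'$; since $(K,r)$ is still held by the honest $P_1$ and $P_M$, the simulator knows the output $\bb$ and samples fake hash values uniformly subject only to $s'(i) \in \Theta'(i) \iff b(i)=0$, and preimage-resistance of the keyed hash makes the simulated transcript computationally indistinguishable from the real one.

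For case (iii), $C$ contains at least two of $\{P_1,P_2,P_M\}$, and I would first show $C$ can recover $\ba$ in every subcase: if $\{P_1,P_M\} \subseteq C$ then $\bs$ and $\bs_M$ are both held directly; if $\{P_1,P_2\} \subseteq C$, then $P_1$'s knowledge of $(K,r)$ lets $C$ brute-force each $s_M(i) \in \{0,\ldots,M\}$ against the received $\Theta'(i)$ to recover $\bs_M$; the subcase $\{P_2,P_M\} \subseteq C$ is symmetric. To bound leakage to \emph{exactly} $\ba$, I would then exhibit a simulator that, given $\ba$ and $C$'s own inputs, samples the missing shares $\bb_{m,\ell}$ with $m \notin C$ uniformly subject only to $\sum_m \bb_m = \ba$: because $C$ is a proper coalition and $M > 2$, for every $m \notin C$ there is at least one honest $\ell \notin C$, so $\bb_{m,\ell}$ remains uniform conditional on the rest of the transcript, hiding each individual $\bb_m$ once $\ba$ is fixed. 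I expect the hardest subcase to be the mixed coalition $\{P_1, P_2\}$ (and its symmetric partner), where I will need to check carefully that the brute-force keyed-hash inversion really extracts only $\bs_M$ and nothing else about the other players' inputs from the interplay between the hashed transcript and the additive shares.
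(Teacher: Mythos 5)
Your proposal is correct and follows essentially the same route as the paper's proof: correctness via the additive-sharing identity $\sum_\ell \bs_\ell \equiv \ba \pmod{M+1}$ combined with the collision-freeness of \textsc{SetInc}, and privacy via the identical three-way case split, invoking perfectness of the $M$-out-of-$M$ sharing in cases (i) and (iii) and infeasibility of inverting the keyed hash without $(K,r)$ in case (ii). Your simulator-based phrasing is merely a more formal restatement of the paper's direct ``the missing shares were chosen at random'' and ``every consistent set of input vectors is equally likely'' arguments, so no substantive difference or gap remains.
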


{\em Proof.}

(a) Protocol \textsc{Threshold} operates correctly if the inequality verifications in Step 7 are carried out correctly, since $(s(i) + s_M(i) ) \mod (M+1)$ equals
the $i$th component $a(i)$ in the sum vector $\ba = \sum_{m=1}^M \bb_m$. The inequality verification is correct if Protocol \textsc{SetInc} is correct.
The latter protocol is indeed correct if the randomly selected $K$ and $r$ are such that
for all $1 \leq i \leq n$, the sets
$\Theta'(i) = \{ h_K(r,i,\theta) : \theta \in \Theta(i)\}$ and
$\Theta''(i) = \{ h_K(r,i,\theta) : \theta \in \Omega \setminus \Theta(i)\}$ are disjoint. (As discussed earlier, such a verification can be carried out
upfront, and most all selections of $K$ and $r$ are expected to pass that test.)

(b) Any single player $P_\ell$, $1 \leq \ell \leq M$, learns in the course of the protocol his share $\bs_\ell$ of the sum $\ba$ in a $M$-out-of-$M$ secret sharing
scheme for $\ba$ (see Step 3 in Protocol \textsc{Threshold}). Two players learn more information: $P_1$ receives the shares $\bs_2,\ldots,\bs_{M-1}$
(Step 4 in Protocol \textsc{Threshold}) and $P_2$ receives the signatures $\bs'$ and $\bTheta'$ during Protocol \textsc{SetInc}.
\begin{enumerate}[(i)]
\item If $P_2,P_1 \notin C$ then the players in $C$ have, at most, the shares $\bs_3,\ldots,\bs_M$. Since the secret sharing scheme is perfect, any number of shares
which is less than $M$ reveals no information on $\ba$, since the missing shares were chosen at random.
If $P_2,P_M \notin C$, then the worst scenario is that in which $P_1 \in C$; in that case,
the coalition knows the shares $\bs_1,\ldots,\bs_{M-1}$. Once again, as the missing share $\bs_M$ was chosen at random by $P_M$, the shares
$\bs_1,\ldots,\bs_{M-1}$ reveal no information
on $\ba$.
\item If $P_2 \in C$ and $P_1,P_M \notin C$, then $C$ has at most the $M-2$ additive shares $\bs_2,\ldots,\bs_{M-1}$.
The additional knowledge that $P_2$ holds enables, in theory, the recovery of the missing shares $\bs_1$ and $\bs_M$ and then the recovery of
$\ba=\sum_{\ell=1}^M \bs_\ell$.
Indeed, by scanning all possible keys $K$ of the keyed hash function and all possible random strings $r$, $P_2$
may find a key $K$ and a string $r$ for which
the signature values that he got from $P_1$ and $P_M$ (namely, $\bs'$ and $\bTheta'$) are consistent with the signature scheme and the elements of $\Omega=\{0,1,\ldots,M\}$.
Hence, the protocol does not provide perfect privacy in the information-theoretic sense
with respect to such coalitions. However, since such a computation is infeasible, and as the hash function is preimage-resistant,
the protocol provides computational privacy with respect to such coalitions.
\item If $P_1,P_M \in C$, then by adding $\bs$ (known to $P_1$) and $\bs_M$ (known to $P_M$), they will get the sum $\ba$. No further information on the input vectors
$\bb_1,\ldots,\bb_M$
may be deduced from the inputs of the players in such a coalition;
specifically, every set of vectors $\bb_1,\ldots,\bb_M$ that is consistent
with the sum $\ba$ is equally likely.
Coalitions $C$ that include either $P_1,P_2$ or $P_2,P_M$ can also recover $\ba$.
Indeed, $P_2$ knows $\bs'$ and $\bTheta'$ and $P_1$ or $P_M$ knows $h_K$, $K$ and $r$.
Hence, a coalition of $P_2$ with either $P_1$ or $P_M$ may recover from those values the preimages $\bs$ and $\bTheta$. Hence, such
a coalition can recover $\bs$ and $\bs_M$, and consequently $\ba$. As argued before, the shares available for such coalitions do not reveal any further information
about the input vectors $\bb_1,\ldots,\bb_M$.
\end{enumerate}
$\Box$

The susceptibility of Protocol \textsc{Threshold-C} to coalitions is not
very significant because of two reasons:

\begin{itemize}
\item The entries of the sum vector $\ba$ do not reveal
information about specific input vectors. Namely, knowing that $a(i)=p$ only indicates that $p$ out of the $M$ bits $b_m(i)$, $1 \leq m \leq M$, equal 1, but it
reveals no information regarding which of the $M$ bits are those.
\item There are only three players that can collude in order to learn information beyond the intention of the protocol.
Such a situation is far less severe than a situation in which
any player may participate in a coalition,
since if it is revealed that a collusion took place, there is a small set of suspects.
\end{itemize}

\subsection{An improved protocol for the secure computation of all locally frequent itemsets}\label{ss1-ours}
As before, we denote by $F_s^{k-1}$ the set of all globally frequent $(k-1)$-itemsets, and by $Ap(F_s^{k-1})$
the set of $k$-itemsets that the Apriori algorithm generates when applied on $F_s^{k-1}$.
All players can compute that set and decide on an ordering of it. (Since all itemsets are subsets of $A=\{a_1,\ldots,a_L\}$, they may be viewed
as binary vectors in $\{0,1\}^L$ and, as such, they may be ordered lexicographically.)
Then, since the sets of locally frequent $k$-itemsets, $C_s^{k,m}$, $1 \leq m \leq M$,
are subsets of $Ap(F_s^{k-1})$, they may be encoded as binary vectors of length $n_k:=|Ap(F_s^{k-1})|$.
The binary vector that encodes the union $C_s^k:=\bigcup_{m=1}^{M} C_s^{k,m}$ is the \texttt{OR} of the vectors that encode the
sets $C_s^{k,m}$, $1 \leq m \leq M$.
Hence, the players can compute the union by invoking Protocol \textsc{Threshold-C} on their binary input vectors. This approach is summarized in Protocol
\ref{algorithm:union} (\textsc{UniFI}). (Replacing $T_1$ with $T_M$ in Step 2 will result in computing the intersection of the private subsets.)

\begin{algorithm}[h!!]
\caption{\label{algorithm:union} (\textsc{UniFI}) Unifying lists of locally Frequent Itemsets}
\begin{algorithmic}[1]
\INPUT Each player $P_m$ has an input subset $C_s^{k,m} \subseteq Ap(F_s^{k-1})$, $1 \leq m \leq M$.
\OUTPUT $C_s^k=\bigcup_{m=1}^{M} C_s^{k,m} $.
\STATE Each player $P_m$ encodes his subset $C_s^{k,m}$ as a binary vector $\bb_m$ of length $n_k=|Ap(F_s^{k-1})|$, in accord with the agreed ordering of $Ap(F_s^{k-1})$.
\STATE The players invoke Protocol \textsc{Threshold-C} to compute $\bb=T_1(\bb_1,\ldots,\bb_M)=\bigvee_{m=1}^M \bb_m$.
\STATE $C_s^k$ is the subset of $Ap(F_s^{k-1})$ that is described by $\bb$.
\end{algorithmic}
\end{algorithm}

\subsection{Privacy}\label{ss1-pri}
We begin by analyzing the privacy offered by Protocol \textsc{UniFI-KC}.
That protocol does not respect perfect privacy since it reveals to the players information that is not implied by their own input and the final output.
In Step 11 of Phase 1 of the protocol, each player augments the set $X_m$ by fake itemsets. To avoid unnecessary hash and encryption computations, those
fake itemsets are random strings in the ciphertext domain of the chosen commutative cipher.
The probability of two players selecting random strings that will become equal at the end of Phase 1 is negligible; so is the probability of Player $P_m$
to select a random string that equals $E_{K_m}(h(x))$ for a true itemset $x \in Ap(F_s^{k-1})$. Hence, every encrypted itemset that appears in two different lists indicates
with high probability a true itemset that is locally $s$-frequent in both of the corresponding sites. Therefore,
Protocol \textsc{UniFI-KC} reveals the following excess information:
\begin{enumerate}[(1)]
\item $P_1$ may deduce for any subset of the even players, the number of itemsets that are locally supported by all of them.
\item $P_2$ may deduce for any subset of the odd players, the number of itemsets that are locally supported by all of them.
\item $P_1$ may deduce the number of itemsets that are supported by at least one odd player and at least one even player.
\item If $P_1$ and $P_2$ collude, they reveal for any subset of the players the number of itemsets that are locally supported by all of them.
\end{enumerate}

As for the privacy offered by Protocol \textsc{UniFI}, we consider two cases:
If there are no collusions, then, by Theorem \ref{prop2},
Protocol \textsc{UniFI} offers perfect privacy with respect to
all players $P_m$, $m \neq 2$,
and computational privacy with respect to $P_2$.
This is a privacy guarantee better than that offered by
Protocol \textsc{UniFI-KC}, since the latter protocol does reveal information to $P_1$ and $P_2$ even if they do not collude with any other player.

If there are collusions, both Protocols \textsc{UniFI-KC} and \textsc{UniFI} allow the colluding parties to learn forbidden information.
In both cases, the number of ``suspects" is small --- in Protocol \textsc{UniFI-KC} only $P_1$ and $P_2$ may benefit from a collusion while in Protocol \textsc{UniFI}
only $P_1$, $P_2$ and $P_M$ can extract additional information if two of them collude (see Theorem \ref{prop2}).
In Protocol \textsc{UniFI-KC}, the excess information which may be extracted
by $P_1$ and $P_2$ is about the number of common frequent itemsets among any subset of the players. Namely, they may learn that, say, $P_2$ and $P_3$ have
many itemsets that are frequent in both of their databases (but not which itemsets), while $P_2$ and $P_4$ have very few itemsets that are frequent in their corresponding
databases. The excess information in Protocol \textsc{UniFI} is different: If any two out of $P_1$, $P_2$ and $P_M$ collude, they can learn
the sum of all private vectors. That sum reveals for each specific
itemset in $Ap(F_s^{k-1})$ the number of sites in which it is frequent, but not which sites.
Hence, while the colluding players in Protocol \textsc{UniFI-KC} can distinguish between the different players and learn about the similarity or dissimilarity between them,
Protocol \textsc{UniFI} leaves the partial databases totally indistinguishable, as the excess information that it leaks
is with regard to the itemsets only.

To summarize, given that Protocol \textsc{UniFI} reveals no excess information when there are no collusions, and, in addition,
when there are collusions, the excess information still leaves the partial databases indistinguishable, it offers enhanced privacy preservation in comparison to
Protocol \textsc{UniFI-KC}.

\subsection{Communication cost}\label{ss1-eff}
Here and in the next section we analyze the communication and computational costs of Protocols \textsc{UniFI-KC} and \textsc{UniFI}.
In doing so, we let $n_k:=|Ap(F_s^{k-1})|$ and $n:=\sum_{k=2}^L n_k$; also, the $k$th iteration refers to the iteration
in which $F^k_s$ is computed from $F^{k-1}_s$ ($2 \leq k \leq L$).

We start with Protocol \textsc{UniFI-KC}.
Let $t$ denote the number of bits required to represent an itemset.
Clearly, $t$ must be at least $\log_2 n_k$ for all $2 \leq k \leq L$. However, as Protocol \textsc{UniFI-KC} hashes the itemsets and then encrypts them,
$t$ should be at least the recommended
ciphertext length in commutative ciphers.
RSA \cite{RSA}, Pohlig-Hellman \cite{PH} and ElGamal \cite{ElG} ciphers are examples of commutative encryption schemes. As the
recommended length of the modulus in all of them is at least 1024 bits, we take $t=1024$.

During Phase 1 of Protocol \textsc{UniFI-KC}, there are $M-1$ rounds of communication, where in each one of them
each of the $M$ players sends to the next player a message; the length of that message in the $k$th iteration is $t n_k$.
Hence, the communication cost of this phase in the $k$th iteration is $(M-1)M t n_k$.
During Phase 2 of the protocol all odd players send their encrypted itemsets to $P_1$ and all even players send their encrypted itemsets to $P_2$. Then $P_2$
unifies the itemsets he got and sends them to $P_1$. Hence, this phase takes 2 more rounds and its communication cost in the $k$th iteration is bounded by $1.5M tn_k$.
In the last phase a similar round of decryptions is initiated. The unified list of all encrypted true and fake itemsets may contain in the $k$th iteration
up to
$M n_k$ itemsets. Hence, that phase involves $M-1$ rounds with communication cost of no more than
$(M-1)M t n_k$.
To summarize: Protocol \textsc{UniFI-KC} entails $2M$ communication rounds (in each of the iterations) and the communication cost in the $k$th iteration is $\Theta(M^2 t n_k)$.
(In fact, since the majority of the itemsets are expected to be fake itemsets, the communication cost in the decryption phase is close to $(M-1)M t n_k$
and then the overall communication cost is roughly $2M^2 t n_k$.)

We now turn to analyze the communication cost of Protocol \textsc{UniFI}.
It consists of 3 communication rounds:
One for Step 2 of Protocol \textsc{Threshold} that it invokes, one for Step 4 of that protocol, and one for the threshold verifications in Steps 6-8
(in which Protocol \textsc{SetInc} is invoked).
Hence, it improves upon Protocol \textsc{UniFI-KC} that entails $2M$ communication rounds.

In the $k$th iteration, the length of the vectors in Protocol \textsc{Threshold} is $n_k$;
each entry in those vectors represents a number between 0 and $M-1$, whence it may be encoded by $\log_2 M$ bits.
Therefore:
\begin{itemize}
\item The communication cost of Step 2 in Protocol \textsc{Threshold} is
$M(M-1) (\log_2 M) n_k$ bits.
\item The communication cost of Step 4 in Protocol \textsc{Threshold} is
$(M-2) (\log_2 M) n_k$ bits.
\item  Steps 6-8 in Protocol \textsc{Threshold} are carried out by invoking Protocol \textsc{SetInc}. The communication cost of Steps 5-6 in the latter protocol
is $2|h| n_k$, where $|h|$ is the size in bits of the hash function's output. (Recall that when Protocol \textsc{SetInc} is called in the framework of
Protocol \textsc{Threshold-C}, the size of each of the sets $\Theta(i)$ is 1.)
\end{itemize}
Hence, the overall communication cost of Protocol \textsc{Threshold-C} in the $k$th iteration is $((M^2-2)(\log_2 M) +2|h|) n_k$ bits.

As discussed earlier, a plausible setting of $t$ would be $t=1024$. A typical value of $|h|$ is 160. With those parameters,
the improvement factor in terms of communication cost, as offered by Protocol \textsc{UniFI} with respect to Protocol \textsc{UniFI-KC}, is approximately
$$ \frac{2M^2 tn_k}{(M^2 \log_2 M  + 2|h|)n_k} = \left(\frac{\log_2 M}{2t} + \frac{|h|}{M^2t} \right)^{-1} = \left(\frac{\log_2 M}{2048} + \frac{0.15625}{M^2} \right)^{-1}\,.$$
For $M=4$ we get an improvement factor of roughly 93, while for $M=8$ we get a factor of about 256.

\subsection{Computational cost}\label{ss1-eff1}
In Protocol \textsc{UniFI-KC} each of the players needs to perform hash evaluations as well as encryptions and decryptions. As the cost of hash evaluations is significantly
smaller than the cost of commutative encryption, we focus on the cost of the latter operations.
In Steps 9-11 of the protocol, player $P_m$ performs $|C_s^{k,m}| \leq n_k=|Ap(F_s^{k-1})|$ encryptions (in the $k$th iteration).
Then, in Steps 14-18, each player performs $M-1$ encryptions of sets that include $n_k$ items. Hence, in Phase 1 in the $k$th iteration, each player performs
between $(M-1)n_k$ and $Mn_k$ encryptions.
In Phase 3, each player decrypts the set of items $EC_s^k$. $EC_s^k$ is the union of the encrypted
sets from all $M$ players, where each of those sets has $n_k$ items --- true and fake ones. Clearly, the size of $EC_s^k$ is at least $n_k$. On the other hand, since
most of the items in the $M$ sets are expected to be fake ones, and the probability of collusions between fake items is negligible, it is expected that the size
of $EC_s^k$ would be close to $M n_k$. So, in all its iterations ($2 \leq k \leq L$),
Protocol \textsc{UniFI-KC} requires each player to perform an overall number of
close to $2Mn$ (but no less than $Mn$) encryptions or decryptions, where, as before $n = \sum_{k=2}^L n_k$.
Since commutative encryption is typically based on modular exponentiation, the overall
computational cost of the protocol is $\Theta(M t^3 n)$ bit operations per player.

In Protocol \textsc{Threshold}, which Protocol \textsc{UniFI} calls, each player needs to generate $(M-1)n$ (pseudo)random $(\log_2 M)$-bit numbers (Step 1).
Then, each player performs $(M-1)n$ additions of such numbers in Step 1 as well as in Step 3. Player $P_1$ has to perform also $(M-2)n$ additions in
Step 5.
Therefore, the computational cost for each player is $\Theta(Mn \log_2 M)$ bit operations. In addition,
Players 1 and $M$ need to perform $n$ hash evaluations.

\subsubsection{Estimating the practical gain in computation time}\label{ss1-prac}
Here, we estimate the practical gain in computation time, as offered by Protocol \textsc{UniFI} in comparison to Protocol \textsc{UniFI-KC}.
We adopt the same estimation methodology that was used in \cite[Section 6.2]{KanCli}.
We measured the times to perform the basic operations used by the two protocols on an Intel(R) Core(TM)2 Quad CPU 2.67 GHz personal computer with 8 GB of RAM:
\begin{itemize}
\item
Modular addition took (on average) 0.762 $\mu$s (microseconds);
\item
random byte generation took 0.0126 $\mu$s;
\item
equality verification between two 160-bit values took 0.13 $\mu$s;
\item
modular exponentiation with modulus of $t=1024$ bits
took 12251 $\mu$s; and
\item
computing HMAC on an input of 512 bits took on average 15.7 $\mu$s.
\end{itemize}
As in \cite{KanCli}, we estimate the added computational cost due to the secure computations in the two protocols when implemented in the experimental setting that
was used in
\cite{article2}. In that experimental setting, the number of sites was $M=3$, the number of items was $L=1000$ and the unified database contained $N=500,000$ transactions.
Using a support threshold of $s=0.01$, \cite{article2} reports that $n=\sum_{k=2}^L n_k $ was just over 100,000.

In Protocol \textsc{UniFI-KC} each player performs roughly $2M n$ encryptions and decryptions.
Hence, the corresponding encryption time per player is
$2\cdot 3 \cdot 100,000 \cdot 12251$ $\mu$s in this setting, i.e., approximately 7350 seconds.
In comparison, Protocol \textsc{UniFI} requires each player to generate $(M-1)n \log_2 M$ pseudo random bits (50,000 bytes in this setting, which mean
$630$ $\mu$s), and perform $(2M-2)n$ additions (400,000 modular additions in this case, which mean $0.305$ seconds).
$P_1$ has to perform $(M-2)n$ additional additions ($0.076$ seconds); $P_1$ and $P_M$
need to perform 100,000 HMAC computations (1.57 seconds); and $P_2$ performs 100,000 verifications of equality between 160-bit values ($0.013$ seconds). The overall
computational overhead for the least busy player ($P_2$) is $0.318$ seconds and for the busiest player ($P_1$) is 1.951 seconds. Hence,
compared to 7350 seconds for each player in Protocol \textsc{UniFI-KC}, the improvement in computational time overhead is overwhelming.

\section{Identifying the globally $s$-frequent itemsets}\label{ss2}
Protocol \textsc{UniFI-KC} yields the set $C_s^k$ that consists of all itemsets that are locally $s$-frequent in at least one site. Those are the $k$-itemsets
that have potential to be also globally $s$-frequent. In order to reveal which of those itemsets is globally $s$-frequent there is a need to securely
compute the support of each of those itemsets. That computation must not reveal the local support in any of the sites.
Let $x$ be one of the candidate itemsets in $C_s^k$. Then $x$ is globally $s$-frequent if and only if
\be \Delta(x):=supp(x) -s N=\sum_{m=1}^M (supp_m(x) - s N_m) \geq 0 \,.
\label{sume} \ee
Kantarcioglu and Clifton considered two possible settings.
If the required output includes all globally $s$-frequent itemsets, as well as the sizes of their supports, then the values of $\Delta(x)$ can be revealed for all
$x \in C_s^k$. In such a case, those values may be computed using a secure summation protocol (e.g. \cite{Ben}). The more interesting setting, however, is the one where
the support sizes are not part of the required output. We proceed to discuss it.

Let $q$ be an integer greater than $2N$. Then,
since $|\Delta(x)|\leq N$, the itemset $x \in C_s^k$ is $s$-frequent if and only if $\Delta(x) \mod q < q/2$.
The idea is to verify that inequality by starting an implementation of
the secure summation protocol of \cite{Ben} on the private inputs $\Delta_m(x):=supp_m(x) - s N_m$, modulo $q$.
In that protocol, all players jointly compute random additive shares of the required sum $\Delta(x)$ and then, by sending all shares to, say, $P_1$, he may add them
and reveal the sum. If, however, $P_M$ withholds his share of the sum, then $P_1$ will have
one random share, $s_1(x)$, of $\Delta(x)$, and $P_M$ will have a corresponding share, $s_M(x)$; namely,
$s_1(x)+s_M(x) = \Delta(x)$ mod $q$.
It is then proposed that the two players execute the generic secure circuit evaluation of \cite{Yao} in order to
verify whether
\be ( s_1(x)+s_M(x) ) \mod q \leq q/2\,. \label{ineww}\ee
Those circuit evaluations may be parallelized for all $x \in C_s^k$.

We observe that inequality (\ref{ineww}) holds if and only if
\be s_1(x) \in \Theta(x):=\{ (j - s_M(x)) \mod q: 0 \leq j \leq \lfloor (q-1)/2 \rfloor \} \,. \label{inew2r}\ee
As $s_1(x)$ is known only to $P_1$ while $\Theta(x)$ is known only to $P_M$, the verification of the set inclusion in (\ref{inew2r}) can theoretically
be carried out by means of Protocol
\textsc{SetInc}.
However, the ground set $\Omega$ in this case is $\ZZ_q$, which is typically a large set. (Recall that when Protocol \textsc{SetInc} is invoked from
\textsc{UniFI}, the ground
set $\Omega$ is $\ZZ_{M+1}$, which is typically a small set.) Hence, Protocol
\textsc{SetInc} is not useful in this case, and, consequently, Yao's generic protocol remains, for the moment, the simplest way to securely
verify inequality (\ref{ineww}).
Yao's protocol is designed for the two-party case. In our setting, as $M>2$, there exist additional semi-honest players. (This is the assumption
on which Protocol \textsc{SetInc} relies.)
An interesting question which arises in this context is whether
the existence of such additional semi-honest players may be used to verify inequalities
like (\ref{ineww}), even when the modulus is large, without resorting to costly protocols such as oblivious transfer.

\section{Identifying all $(s,c)$-association rules}\label{ss3}
Once all $s$-frequent itemsets are found ($F_s$), we may proceed to look for all $(s,c)$-association rules (rules with support at least $sN$ and confidence at least $c$).
For $X,Y \in F_s$, the corresponding association rule $X \Rightarrow Y$ has confidence at least $c$ if and only if
$supp(X\cup Y)/supp(X) \geq c$, or, equivalently,
\be C_{X,Y}:= \sum_{m=1}^M \left( supp_m(X \cup Y)-c\cdot supp_m(X) \right) \geq 0\,. \label{eq1s}\ee
Since $|C_{X,Y}| \leq N$, then by taking $q= 2N  +1$, the players can verify inequality (\ref{eq1s}),
in parallel, for all candidate association rules, as described in Section \ref{ss2}.

\comment{
In order to derive from $F_s$ all $(s,c)$-association rules in an efficient manner we rely upon the following straightforward lemma.
\begin{lemma}\label{lm34} If $X \Rightarrow Y$ is an $(s,c)$-rule and $Y' \subset Y$, then $X \Rightarrow Y'$ is also an $(s,c)$-rule.
\end{lemma}

\begin{proof} The rule $X \Rightarrow Y'$ has the required support count since $supp(X \cup Y') \geq supp(X \cup Y) \geq sN$. It is also $c$-confident since
$ \frac{supp(X \cup Y')}{supp(X )} \geq  \frac{supp(X \cup Y)}{supp(X )} \geq c$.
Hence, it is an $(s,c)$-rule too.
\end{proof}

We first find all $(s,c)$-rules with 1-consequents; namely, all $(s,c)$-rules $X \Rightarrow Y$ with a consequent (RHS) $Y$ of size 1.
To that end, we scan all itemsets $Z \in F_s$ of size $|Z| \geq 2$, and for each such itemset we scan all $|Z|$ partitions $Z = X \cup Y$
where $|Y|=1$ and $X = Z \setminus Y$. The association rule $X \Rightarrow Y$ that corresponds to such a given partition $Z = X \cup Y$
is tested to see whether it satisfies inequality (\ref{eq1s}). We may test all those candidate rules in parallel and at the end we get the full list
of all $(s,c)$-rules with 1-consequents.

We then proceed by induction; assume that we found all $(s,c)$-rules with $j$-consequents for all $1 \leq j \leq \ell-1$.
To find all $(s,c)$-rules with $\ell$-consequents, we rely upon Lemma \ref{lm34}. Namely, if $Z \in F_s$ and $Z = X \cup Y$ where $X \cap Y = \emptyset$ and $|Y|=\ell$,
then $X \Rightarrow Y$ is an $(s,c)$-rule only if
$X \Rightarrow Y'$ were found to be $(s,c)$-rules for all $Y' \subset Y$.
Hence, we may create all candidate rules with $\ell$-consequents and test them against inequality (\ref{eq1s}) in parallel.

It should be noted that in practice, one usually aims at finding association rules of the form $X \Rightarrow Y$ where $|Y|=1$, or at least $|Y| \leq \ell$ for some
small constant $\ell$. However, the above procedure may be continued until all candidate association rules, with no upper bounds on the consequent size, are found.
}

\section{Related work}\label{rel}
Previous work in privacy preserving data mining has considered two related settings. One, in which the data owner and the data miner are two different entities,
and another, in which the data is distributed among several parties who aim to jointly perform data mining on the unified corpus of data that they hold.

In the first setting, the goal is to protect the data records from the data miner. Hence, the data owner aims at
anonymizing the data prior to its release. The main approach in this context is to apply data perturbation \cite{AS00,Evfi}. The idea is that the perturbed
data can be used to infer general trends in the data, without revealing original record information.

In the second setting, the goal is to perform data mining while
protecting the data records of each of the data owners from the other data owners.
This is a problem of secure multi-party computation. The usual approach here is cryptographic rather than probabilistic.
Lindell and Pinkas \cite{LP00} showed how to securely build an ID3 decision tree when the training set is distributed horizontally.
Lin et al. \cite{LinCZ05} discussed secure clustering using the EM algorithm over horizontally distributed data.
The problem of distributed association rule mining was studied in \cite{VC02,ZMC} in the vertical setting, where each party holds a different set of attributes,
and in \cite{KanCli} in the horizontal setting. Also the work of \cite{SWG04} considered this problem in the horizontal setting, but they considered large-scale
systems in which, on top of the parties that hold the data records (resources) there are also managers which are computers that assist the
resources to decrypt messages; another assumption made in \cite{SWG04} that distinguishes it from \cite{KanCli} and the present study
is that no collusions occur between the
different network nodes --- resources or managers.

\section{Conclusion}\label{conc}
We proposed a protocol for secure mining of association rules in horizontally distributed databases that improves significantly
upon the current leading protocol \cite{KanCli} in terms of privacy and efficiency.
One of the main ingredients in our proposed protocol
is a novel secure multi-party protocol for computing the union (or intersection) of private subsets that each of the interacting players hold.
Another ingredient is a protocol that tests the inclusion of an
element held by one player in a subset held by another. The latter protocol exploits the fact that the underlying problem is of interest only when the number of players
is greater than two.

One research problem that this study suggests was described in
Section \ref{ss2}; namely, to devise an efficient protocol for
set inclusion verification that uses the existence of a semi-honest third party. Such a protocol might enable to
further improve upon the communication and computational costs of
the second and third stages of the protocol of \cite{KanCli}, as
described in Sections \ref{ss2} and \ref{ss3}. Another research
problem that this study suggests is the extension of those
techniques to the problem of mining generalized association rules
\cite{SriAg}.

\bibliographystyle{plain}
\bibliography{DARM}

\end{document}